\documentclass[review]{elsarticle}
\usepackage{amsmath,amsthm}
\usepackage{dutchcal}
\usepackage[margin=1.2 in]{geometry}
\usepackage{lineno,xcolor,float}
\theoremstyle{definition}

\usepackage[T1]{fontenc}
\newtheorem{theorem}{Theorem}

\usepackage{subcaption}

\usepackage{amsfonts,mathrsfs}
\usepackage{physics}
\usepackage{csquotes}
\usepackage[T1]{fontenc}
\usepackage{subcaption}
\usepackage{comment}
\usepackage{url}
\definecolor{newcolor}{rgb}{.8,.349,.1}
\usepackage{latexsym}
\usepackage{framed,multirow}
\usepackage{tocbibind}
\usepackage{setspace}
\usepackage{graphicx}
\usepackage{caption}
\usepackage{dcolumn}
\usepackage{bm}
\usepackage{fontenc}
\usepackage{float}
\usepackage{array}
\usepackage{tabularx}
\usepackage{booktabs}
\usepackage{mathtools}
\usepackage{amssymb}
\usepackage{bm}
\usepackage[english]{babel}
\graphicspath{ {images/} }
\usepackage{blindtext}
\usepackage{fancyhdr}

\title{Age-Structured Epidemic Model under Vaccination with Vector Transmission}
\author{
Sourav Banerjee\footnote{Corresponding author:
    email: \texttt{sourav\_p220128ma@nitc.ac.in}, Department of Mathematics, NIT Calicut,
    Calicut 673601, India},
Thomas Götz\footnote{Mathematical Institute
University of Koblenz
Universitätsstr. 1
D-56070 Koblenz
Büro G 114
    \newline e-mail: \texttt{goetz@uni-koblenz.de}},
Satyananda Panda\footnote{Department of Mathematics, NIT Calicut,
    Calicut 673601, India \newline e-mail: \texttt{satyanand@nitc.ac.in}}
}
\begin{document}
\begin{abstract}
Dengue remains a major global public health concern 
due to its high morbidity and economic burden. 
Mathematical modeling is essential to understand 
its transmission mechanisms and for evaluating 
intervention strategies. In this paper, we formulate 
a vector--host model in which the human population 
is structured by age, and vaccinated individuals 
are further described by time since vaccination. 
The mosquito population is coupled to the host 
dynamics and reduced under a quasi-steady-state 
assumption. By integrating over vaccination age, 
we obtain a nonlinear steady-state formulation and 
express the endemic equilibrium as a fixed-point 
problem for the infected mosquito population. 
Using monotonicity arguments and the intermediate 
value theorem, we establish existence and uniqueness 
of the endemic equilibrium under a biologically 
interpretable threshold condition on the basic 
reproduction number. The analysis highlights the 
influence of age-dependent vaccination on long-term 
dengue dynamics.
\end{abstract}

\begin{keyword}
Age-structured epidemic model \sep vaccination \sep vector-host system \sep fixed-point theory \sep dengue dynamics
\end{keyword}

\maketitle
\section{Introduction}
Dengue is one of the most rapidly spreading mosquito-borne viral diseases worldwide and remains a major public health concern in tropical and subtropical regions \citep{world2025report,Bhatt2013}. The disease is transmitted primarily by \emph{Aedes} mosquitoes and exhibits complex epidemiological patterns influenced by demographic structure, immunity development, environmental factors, and vector–host interactions. Epidemiological evidence suggests that dengue incidence and hospitalization rates depend strongly on patient age, indicating that demographic heterogeneity plays a crucial role in transmission dynamics.

Mathematical models have long been used to understand the spread and control of infectious diseases. Classical compartmental models based on ordinary differential equations, such as SIR and SEIR frameworks, provide valuable insights into threshold behavior, stability of equilibria, and control strategies \citep{Anderson1992,Hethcote2000}. For vector-borne diseases, host–vector models extending the Ross–Macdonald paradigm have been extensively investigated, leading to the formulation of the basic reproduction number and its role in determining disease persistence \citep{Smith2012,Esteva1998}. While these models capture essential biological mechanisms, they typically assume homogeneous host populations and do not account for age-dependent transmission effects.

Age-structured models formulated as transport-type partial differential equations offer a natural framework to incorporate demographic heterogeneity into epidemic modeling \citep{Webb1985,Inaba2017}. In such models, epidemiological parameters such as transmission, mortality, and recovery rates may vary with chronological age, yielding a more realistic description of disease spread. Extensions of the basic reproduction number to age-structured systems have been developed using next-generation operator theory and spectral methods \citep{Diekmann1990,Thieme2009}. These approaches provide rigorous criteria for the existence of disease-free and endemic equilibria in structured populations.

In the context of dengue transmission, age dependence is particularly relevant due to differences in exposure risk, immunity acquisition, and vaccination eligibility across age groups. Several studies have incorporated demographic effects and heterogeneous biting rates into host–vector models \citep{Feng1997}. The age-structured framework developed by Ganegoda \textit{et al}.~\citep{ganegoda2021age} investigated dengue transmission by structuring the host population with respect to chronological age and analyzing the resulting equilibrium behavior through an extension of the basic reproduction number. 

The present study extends the framework of Ganegoda \textit{et al}.~\citep{ganegoda2021age} by incorporating vaccination through an explicit vaccination-age structure and by coupling the age-structured human population with mosquito vector dynamics. The human population is partitioned into vaccinated and non-vaccinated subpopulations, and vaccine-induced protection is represented by a vaccination-age-dependent efficacy function that modulates susceptibility over time. This formulation permits the analysis of vaccination strategies within an age-structured vector–host model while preserving a rigorous mathematical treatment of the steady-state problem. We establish the existence and uniqueness of the steady-state infected mosquito population under a biologically interpretable threshold condition 
on the basic reproduction number, using 
monotonicity arguments and the intermediate value 
theorem. Numerical simulations are performed to 
investigate the impact of vaccination rate and 
vaccine efficacy on both vaccinated and 
non-vaccinated subpopulations. The results 
highlight the importance of age-structured 
vaccination strategies in controlling long-term 
dengue transmission.

The paper is structured as follows: The mathematical formulation of the age-structured vector–host model with vaccination is described in the next section. The reduction of the model, obtained by integrating over the vaccination-age variable, is presented in Sec. 3. The existence and uniqueness of the endemic equilibrium, based on a fixed-point formulation, are discussed in Sec. 4. The numerical method for computing the equilibrium solution is given in Sec. 5. The numerical results and their interpretation are presented in Sec. 6, and the paper closes with concluding remarks in the final section.

\section{An age-structured SISUV model}
We consider a human population structured by age $x \in [0,A]$ where $A$ is the maximum age in the population and time $t \ge 0$. Vaccinated individuals are additionally structured by vaccination age $y \ge 0$, representing the time elapsed since vaccination. The total population density at time $t$ and age $x$ is defined by
\begin{equation}\label{eq:pdef}
p(t,x) = Q(t,x) + J(t,x) + \int_0^\infty \Big[S(t,x,y) + I(t,x,y)\Big] dy,
\end{equation}
where $Q(t,x)$ and $J(t,x)$ denote the densities of non-vaccinated susceptible and infected individuals, respectively, and $S(t,x,y)$ and $I(t,x,y)$ denote the densities of vaccinated susceptible and infected individuals, respectively, with vaccination age $y$. The total host population is given by
\begin{equation}\label{eq:Nt}
N(t) := \int_0^A p(t,x)\,dx.
\end{equation}
The mosquito population is divided into susceptible mosquitoes $u(t)$ and infected mosquitoes $v(t)$, with total population $\mathit{M}(t) = u(t) + v(t)$. The non-vaccinated susceptible population evolves according to
\begin{equation}
\left\{
\begin{aligned}
\frac{\partial Q}{\partial t} + \frac{\partial Q}{\partial x} &=
- \frac{1}{N(t)} \vartheta(x) v(t) Q
- \mu(x) Q
+ \gamma J
- \nu_S Q, \\
Q(t, 0) &= \int_0^A b(s) p(t,s) ds,
\end{aligned}
\right.
\label{eq:nvs}
\end{equation}
where infection occurs through contact between susceptible humans and infected mosquitoes, with $\vartheta(x)$ denoting the age-dependent transmission rate from mosquitoes to humans, $\mu(x)$ denotes the age-dependent natural mortality rate assumed to be independent of the infection status, $\gamma$ is the recovery rate, and $\nu_S$ is the vaccination rate of susceptible individuals. We assume that all newborns are free of dengue infection and therefore enter exclusively into the non-vaccinated susceptible class with birth rate $b(x)$. The non-vaccinated infected population satisfies
\begin{equation}
\left\{
\begin{aligned}
\frac{\partial J}{\partial t} + \frac{\partial J}{\partial x} &=
\frac{1}{N(t)} \vartheta(x) v(t) Q
- \mu(x) J
- \gamma J
- \nu_I J, \\
J(t, 0) &= 0,
\end{aligned}
\right.
\label{eq:nvi}
\end{equation}
where $\nu_I$ represents the vaccination rate of infected individuals. Vaccinated susceptible individuals are structured by both chronological and vaccination age and satisfy
\begin{equation}
\left\{
\begin{aligned}
\frac{\partial S}{\partial t} + \frac{\partial S}{\partial x} + \frac{\partial S}{\partial y} &=
- [1 - \Omega(y)] \frac{1}{N(t)} \vartheta(x) v(t) S
- \mu(x) S
+ \gamma I, \\
S(t, x, 0) &= \nu_S Q(t,x),
\end{aligned}
\right.
\label{eq:vs}
\end{equation}
where $\Omega(y)$ denotes the vaccination-age dependent efficacy function, representing the level of protection conferred by vaccination age $y$.
Similarly, vaccinated infected individuals satisfy
\begin{equation}
\left\{
\begin{aligned}
\frac{\partial I}{\partial t} + \frac{\partial I}{\partial x} + \frac{\partial I}{\partial y} &=
[1 - \Omega(y)] \frac{1}{N(t)} \vartheta(x) v(t) S
- \mu(x) I
- \gamma I, \\
I(t, x, 0) &= \nu_I J(t,x),
\end{aligned}
\right.
\label{eq:vi}
\end{equation}
reflecting infection of vaccinated individuals with reduced susceptibility. 
The human host system is coupled with the mosquito population dynamics through the following equations. The susceptible mosquito population satisfies
\begin{equation}
\left\{
\begin{aligned}
\frac{du}{dt} &= 
\Lambda
- u(t) \int_0^A \sigma(x) J(t,x) \, dx
- u(t) \int_0^A \int_0^{\infty} \sigma(x) I(t,x,y) \, dy \, dx
- \rho u(t), \\
u(0) &= u_0,
\end{aligned}
\right.
\label{eq:sm}
\end{equation}
where $\Lambda$ denotes the mosquito birth rate and $\rho$ the natural death rate. The function $\sigma(x)$ represents the age-dependent transmission rate from humans to mosquitoes. The infected mosquito population evolves according to
\begin{equation}
\left\{
\begin{aligned}
\frac{dv}{dt} &= 
u(t) \int_0^A \sigma(x) J(t,x) \, dx
+ u(t) \int_0^A \int_0^{\infty} \sigma(x) I(t,x,y) \, dy \, dx
- \rho v(t), \\
v(0) &= v_0.
\end{aligned}
\right.
\label{eq:im}
\end{equation}
Here $u_0$ and $v_0$ are the initial mosquito population of susceptible and infected class respectively. In the next section, we reduce the full age–vaccination structured system by assuming steady-state mosquito dynamics and aggregating the vaccinated compartments to facilitate the analysis of endemic equilibria.
%=============================================
\section{Model Reduction}
 For simplicity, we assume that the non-vaccinated susceptible and non-vaccinated infected individuals share the same vaccination rate, that is, $\nu_S = \nu_I = \nu$. Upon assuming a sufficiently large time scale, the vector and human population are considered to exhibit minimal variation within a single unit of time. Under such steady-state conditions, the population is assumed to reflect the system's inherent nontrivial equilibrium.  Thus Eq.~\eqref{eq:Nt} reads $N(t)=N=\int_0^A p(x)dx$. 
 Moreover, vaccination age is intrinsically limited by human lifespan and natural mortality; consequently the population density corresponding to sufficiently large vaccination ages becomes negligible; therefore, we assume that   $S(x,y)$ and $I(x,y)$ vanish as $y \to \infty$. To reduce the dimensionality of the vaccinated subsystem, we integrate equations \eqref{eq:vs} and \eqref{eq:vi} with respect to vaccination age $y$ over $[0,\infty)$. This yields
\begin{equation}\label{eq:vsr}
\int_0^\infty  \frac{\partial S}{\partial x}\,dy -S(x,y=0)
= \int_0^\infty \left [ - [1 - \Omega(y)] \frac{1}{N} \vartheta(x) \,v^{\ast} S - \mu(x) S + \gamma I \right ]dy,
\end{equation}
\begin{equation}\label{eq:vir}
\int_0^\infty  \frac{\partial I}{\partial x}\,dy-I(x,y=0)
= \int_0^\infty \left [  [1 - \Omega(y)] \frac{1}{N} \vartheta(x)\,v^{\ast} S - \mu(x) I - \gamma I \right ]dy.
\end{equation}
Here $v^{\ast}$ is steady state infected mosquito population. Vaccination is not vertically transmitted from mother to child at birth. Therefore,
\begin{equation}\label{eq:si0}
S(x=0,y) = I(x=0,y) = 0.
\end{equation}
Moreover, children are generally vaccinated only after the age of four years~\citep{kling2023stiko}.
Next, on adding the steady-state form of Eqs.~\eqref{eq:nvs}, \eqref{eq:nvi} with \eqref{eq:vsr}, and \eqref{eq:vir}, we get
\begin{equation}\label{eq:std_nsi}
\frac{d p}{d x}=-\mu(x)p(x), \,\,\,\,\,p(x=0)=\int_0^A b(s) \, p(s) \, ds,
\end{equation}
where the total population of age $x$ is denoted by $p$ (see Eq.~\eqref{eq:pdef}). Next, Eq.~\eqref{eq:std_nsi} is a linear ODE, and the solution is given by
\begin{equation}\label{eq:p}
p(x) = p_0 e^{-\int_0^x \mu(s)\, ds}.
\end{equation}
 Here the integrating constant $p_0$ is defined as \(p_0 = \int_0^A b(s)\, p(s)\, ds\). Next, we define the aggregated vaccinated populations by
\[
\hat{S}(x) := \int_0^\infty S(x,y)\,dy,
\qquad
\hat{I}(x) := \int_0^\infty I(x,y)\,dy,
\]
where $\hat{S}(x)$ and $\hat{I}(x)$ denote the total densities of vaccinated susceptible and vaccinated infected individuals, respectively, of chronological age $x$, irrespective of vaccination age. We then define the total non-vaccinated population by $\tilde{p}$ and the total vaccinated population by $\tilde{q}$, where
\begin{align}
\tilde{p}(x) &:= Q(x) + J(x), \label{eq:tillp}\\
\tilde{q}(x) &:= \hat{S}(x) + \hat{I}(x). \label{eq:tillq}
\end{align}
It follows that
\begin{equation}\label{eq:pp}
\tilde{p}(x) + \tilde{q}(x) = p(x).
\end{equation}
 Hence, by adding Eq.~\eqref{eq:nvs} and Eq.~\eqref{eq:nvi} we have
\begin{equation}\label{eq:ptil}
\frac{d \tilde{p}}{dx} = -\left(\mu + \nu\right) \tilde{p}
, \quad \tilde{p}(x=0) = p_0
\end{equation}
and by adding Eq.~\eqref{eq:vsr} and Eq.~\eqref{eq:vir} we get 
\begin{equation}\label{eq:qtil}
\frac{d \tilde{q}}{dx} = \nu \tilde{p} - \mu \tilde{q}
, \quad \tilde{q}(x=0) = 0
\end{equation}
Using Eq.~\eqref{eq:p}, the solutions of Eqs.~\eqref{eq:ptil} and~\eqref{eq:qtil} are given by
\begin{align}
\tilde{p}(x) &= p(x)\, e^{-\nu x}, \label{eq:p_til}\\
\tilde{q}(x) &= p(x)\left(1 - e^{-\nu x}\right). \label{eq:q_til}
\end{align}
Next, we introduce the following age-dependent population fractions:
\begin{equation}\label{eq:wxyz}
\begin{aligned}
W(x) &:= \frac{J(x)}{p(x)}, 
&\qquad
Z(x) &:= \frac{Q(x)}{p(x)},\\
X(x) &:= \frac{\hat{I}(x)}{p(x)}, 
&\qquad
Y(x) &:= \frac{\hat{S}(x)}{p(x)}.
\end{aligned}
\end{equation}
Since all population densities are nonnegative and $p(x)>0$, it follows that
\begin{equation}\label{eq:wxyz_bounds}
0 \le W(x),\, Z(x),\, X(x),\, Y(x) \le 1, 
\qquad
W(x)+Z(x)+X(x)+Y(x)=1.
\end{equation}
Here, $W(x)$ and $Z(x)$ denote the compartments of non-vaccinated infected and non-vaccinated susceptible individuals of age $x$, respectively, while $X(x)$ and $Y(x)$ represent the compartments of vaccinated infected and vaccinated susceptible individuals of age $x$, respectively. Moreover, using definitions~\eqref{eq:tillp} and~\eqref{eq:tillq} together with expressions~\eqref{eq:p_til}, \eqref{eq:q_til}, and~\eqref{eq:wxyz}, we obtain
\begin{equation}\label{eq:ZYrel}
Z(x)=e^{-\nu x}-W(x),
\qquad
Y(x)=1-e^{-\nu x}-X(x).
\end{equation}
At steady state, Eqs.~\eqref{eq:sm} and~\eqref{eq:im} yield
\begin{equation}\label{eq:star}
u^{*} = \frac{\Lambda}{\Upsilon + \rho},
\qquad
v^{\ast} = \frac{\Lambda}{\Upsilon + \rho}\,\frac{\Upsilon}{\rho}
      = \frac{\Lambda \Upsilon}{\rho^{2}}
        \left(1 + \frac{\Upsilon}{\rho}\right)^{-1}.
\end{equation}
where \(\Upsilon\) denotes the total infection pressure exerted by infected humans on susceptible mosquitoes and is defined by 
\begin{equation}\label{eq:Upsi}
\Upsilon = \int_0^A \sigma(x) J(x) \, dx + \int_0^A \int_0^{\infty} \sigma(x) I(x,y) \, dy \, dx. 
\end{equation}
The terms $u^*$ and $v^{\ast}$ are the steady-state representation of mosquito populations. Let us introduce the following notation:
\begin{equation}\label{eq:sigt}
\tilde{\sigma} = \frac{ \displaystyle\int_0^A \sigma(x) J(x) \, dx + \int_0^A \int_0^{\infty} \sigma(x) I(x, y) \, dy \, dx }{ \displaystyle\int_0^A \Bigg[J(x)+ \int_0^{\infty} I(x,y) dy\Bigg]dx}
\end{equation}
where the quantity $\tilde{\sigma}$ represents the average transmission rate from infected humans to mosquitoes, weighted by the densities of infected individuals across age and vaccination age. Due to isolation measures and the clean hospital environment, the infection pressure exerted by humans on mosquitoes is assumed to be small relative to the mosquito mortality rate, i.e., $\Upsilon/\rho \ll 1$. This assumption reflects relatively low human-to-mosquito transmission compared to mosquito mortality, consistent with controlled or hospital-based environments. Under this assumption, Eqs.~\eqref{eq:star} and~\eqref{eq:sigt} yield
\begin{align}
v^{\ast} &\simeq \frac{\Lambda \Upsilon}{\rho^{2}}
      = \frac{\Lambda \tilde{\sigma}}{\rho^{2}}
        \int_{0}^{A} \Big[J(x) + \hat{I}(x)\Big]\, dx \nonumber \\
     &= \frac{\Lambda \tilde{\sigma}}{\rho^{2}}
        \int_{0}^{A} \big[ W(x) + X(x) \big]\, p(x)\, dx.
\label{eq:vstar}
\end{align}
Next, evaluating Eqs.~\eqref{eq:nvi} and~\eqref{eq:std_nsi} at steady state, together with Eqs.~\eqref{eq:wxyz} and~\eqref{eq:ZYrel}, we obtain
\begin{equation}\label{eq:Wode}
W'(x)+(\gamma+\nu)W(x)=\frac{\vartheta(x)}{N}v^{\ast}\left(e^{-\nu x}-W(x)\right),
\qquad
W(0)=0.
\end{equation}
Analogously, the vaccinated infected individuals satisfy Eq.~\eqref{eq:vir}, where the first contribution on the right-hand side accounts for vaccine-induced protection and is approximated as
\begin{equation}\label{eq:aprx}
\int_0^\infty (1-\Omega(y))\, S(x,y)\, dy = (1-\Omega_b)\, \hat{S}(x).
\end{equation}
Here, $\Omega_b$ denotes the averaged vaccine efficacy over the vaccination-age distribution.
Using Eqs.~\eqref{eq:vir}, \eqref{eq:std_nsi}, \eqref{eq:wxyz}, and~\eqref{eq:ZYrel} together with the approximation~\eqref{eq:aprx}, we obtain
\begin{equation}\label{eq:Xode}
X'(x) + \gamma X(x)
= \frac{\vartheta(x)}{N} \, v^{\ast} (1-\Omega_b)
\bigl( (1 - e^{-\nu x}) - X(x) \bigr)
+ \nu W(x),
\qquad
X(0) = 0.
\end{equation}
The coupled system for $W$ and $X$ given by Eqs.~\eqref{eq:Wode} and~\eqref{eq:Xode} depends on the mosquito population $v^{\ast}$, which is itself determined by $W$ and $X$ through Eq.~\eqref{eq:vstar}. The final coupled system is written as follows:
\begin{equation}\label{eq:fs}
\left\{
\begin{aligned}
W'(x) + (\gamma + \nu) W(x)
&= \frac{\vartheta(x)}{N} \, v^{\ast} \left( e^{-\nu x} - W(x) \right),
\qquad W(0) = 0, \\[6pt]
X'(x) + \gamma X(x)
&= \frac{\vartheta(x)}{N} \, v^{\ast} (1 - \Omega_b)
\left[ (1 - e^{-\nu x}) - X(x) \right]
+ \nu W(x),
\qquad X(0) = 0, \\[6pt]
v^{\ast}
&= \frac{\Lambda \tilde{\sigma}}{\rho^2}
\int_{0}^{A} \bigl[ W(x) + X(x) \bigr] \, p(x) \, dx.
\end{aligned}
\right.
\end{equation}
The mutual dependence in Eqs.~\eqref{eq:fs} leads to a nonlinear fixed-point problem. In the next section, we establish existence and uniqueness of the corresponding solution.
%=======================================
\section{Existence and Uniqueness of the Endemic 
Equilibrium}\label{sec:EU}
In this section, we establish the existence and uniqueness of the endemic equilibrium. 
Throughout, we assume:
\begin{itemize}
\item[(A1)] $\vartheta \in L^\infty(0,A)$ with $\vartheta(x) > 0$,
\item[(A2)] $p \in C([0,A])$ with $p(x) > 0$ for all $x \in [0,A]$,
\item[(A3)] $0 \le \Omega_b \le 1$.
\end{itemize}
\noindent
For each fixed $v^{\ast} \ge 0$, the solutions of system~\eqref{eq:fs} are given by
\begin{equation}\label{eq:WXvc}
\begin{cases}
\displaystyle
W_{v^{\ast}}(x)
=
\frac{v^{\ast}}{N}
\int_0^x
\vartheta(s)\,
\exp\!\left(
-\nu s-(\gamma+\nu)(x-s)
-\frac{v^{\ast}}{N}\int_s^x\vartheta(r)\,dr
\right)ds,
\\[10pt]
\displaystyle
X_{v^{\ast}}(x)
=
\int_0^x
\left[
\frac{\vartheta(s)}{N}v^{\ast}(1-\Omega_b)\left(1-\exp(-\nu s)\right)
+\nu W_{v^{\ast}}(s)
\right]
\exp\!\left(
-\gamma(x-s)
-\frac{v^{\ast}(1-\Omega_b)}{N}
\int_s^x\vartheta(r)\,dr
\right)ds.
\end{cases}
\end{equation}
Next, we introduce two functions $w(x)$ and $\xi(x)$, which represent the marginal infection densities per unit infected mosquito population in the limit of low transmission:
\begin{equation}\label{eq:wxi}
w(x)=\lim_{v^{\ast}\to 0^+}\frac{W_{v^{\ast}}(x)}{v^{\ast}}, \qquad 
\xi(x)=\lim_{v^{\ast}\to 0^+}\frac{X_{v^{\ast}}(x)}{v^{\ast}}.
\end{equation}
The integrands in Eq.~\eqref{eq:WXvc} are uniformly bounded under assumptions (A1)--(A3). Hence, by the dominated convergence theorem, the limits in Eq.~\eqref{eq:wxi} exist and are given by
\begin{align}
w(x)
&=
\frac{1}{N}\int_0^x
\vartheta(s)\,
\exp\!\left(-\nu s-(\gamma+\nu)(x-s)\right)\,ds,
\label{eq:wlim}\\[6pt]
\xi(x)
&=
\int_0^x
\left[
\frac{\vartheta(s)}{N}(1-\Omega_b)\left(1-\exp(-\nu s)\right)
+\nu\,w(s)
\right]
\exp\!\left(-\gamma(x-s)\right)\,ds.
\label{eq:xilim}
\end{align}
In Eqs.~\eqref{eq:wlim} and \eqref{eq:xilim}, since all factors in the integrands are nonnegative, $w(x)\ge 0$ and $\xi(x)\ge 0$ for all
$x\in[0,A]$.
We define the basic reproduction number
\begin{equation}\label{eq:R0}
\mathcal{R}_0
:=
\frac{\Lambda\tilde{\sigma}}{\rho^2}
\int_0^A p(x)\bigl(w(x)+\xi(x)\bigr)\,dx.
\end{equation}
This quantity represents the expected number 
of secondary mosquito infections arising from 
a single infected mosquito through one complete 
transmission cycle in a fully susceptible 
population. Next, we define the operator and the bound
\begin{equation}\label{eq:Fop}
F(v^{\ast})
:=
\frac{\Lambda\tilde{\sigma}}{\rho^2}
\int_0^A p(x)\bigl(W_{v^{\ast}}(x)+X_{v^{\ast}}(x)\bigr)\,dx \qquad \qquad M := \frac{\Lambda\tilde{\sigma}}{\rho^2}\cdot\,A\,\|p\|_{\infty}\,(2+\nu A)
\end{equation}
\begin{theorem}\label{thm:main}
(Existence and uniqueness of the endemic equilibrium).
\begin{enumerate}
\item[\emph{(i)}] If $\mathcal{R}_0 \le 1$, then $v^{\ast}=F(v^{\ast})$
has no solution in $(0,M]$.
\item[\emph{(ii)}] If $\mathcal{R}_0 > 1$, then there exists a
unique $v^{\ast}\in(0,M]$ such that $v^{\ast}=F(v^{\ast})$.
\end{enumerate}
\end{theorem}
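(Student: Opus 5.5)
We reduce the fixed-point equation to a scalar equation for the ratio $G(v^{\ast}):=F(v^{\ast})/v^{\ast}$ on $(0,\infty)$, with $\tilde{\sigma}$ treated as a fixed positive constant as in \eqref{eq:R0}. For $v^{\ast}>0$, the identity $v^{\ast}=F(v^{\ast})$ is equivalent to $G(v^{\ast})=1$. We will show three things: $G$ is continuous, $G$ is strictly decreasing, and $\lim_{v^{\ast}\to0^+}G(v^{\ast})=\mathcal{R}_0$. We will also show the a priori bound $F(v^{\ast})\le M$ for all $v^{\ast}\ge 0$. Both parts of Theorem~\ref{thm:main} then follow from the intermediate value theorem.

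\textbf{Step 1 (monotonicity).} By \eqref{eq:WXvc},
\[
\frac{W_{v^{\ast}}(x)}{v^{\ast}}=\frac1N\int_0^x\vartheta(s)\exp\!\Big(-\nu s-(\gamma+\nu)(x-s)-\frac{v^{\ast}}{N}\int_s^x\vartheta(r)\,dr\Big)ds .
\]
This expression is strictly decreasing in $v^{\ast}$ for every $x\in(0,A]$, because $\vartheta>0$ by (A1) makes $\int_s^x\vartheta>0$ for $s<x$.

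Similarly, $X_{v^{\ast}}(x)/v^{\ast}$ is a sum of two terms. The first is $\frac{1-\Omega_b}{N}\int_0^x\vartheta(s)(1-e^{-\nu s})e^{-\gamma(x-s)-\frac{v^{\ast}(1-\Omega_b)}{N}\int_s^x\vartheta}\,ds$, which is nonincreasing in $v^{\ast}$ by (A3). The second is $\nu\int_0^x \frac{W_{v^{\ast}}(s)}{v^{\ast}}\,e^{-\gamma(x-s)-\frac{v^{\ast}(1-\Omega_b)}{N}\int_s^x\vartheta}\,ds$. It is an integral of a product of nonnegative factors that are nonincreasing in $v^{\ast}$, so it is nonincreasing as well.

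Multiplying by $p>0$ (A2) and integrating, the strict decrease of the $W$-part gives that $G$ is strictly decreasing on $(0,\infty)$. Continuity of $G$ on $[0,\infty)$ follows from dominated convergence, since the integrands are bounded uniformly for $v^{\ast}$ in compact sets. The same argument used for \eqref{eq:wlim}--\eqref{eq:xilim} gives $G(0^+)=\mathcal{R}_0$.

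\textbf{Step 2 (the bound $F\le M$).} We integrate the $W$-formula exactly. Dropping the factor $e^{-\nu s-(\gamma+\nu)(x-s)}\le1$ leaves the exact derivative of $\exp\!\big(-\frac{v^{\ast}}{N}\int_s^x\vartheta\big)$ in $s$, so
\[
W_{v^{\ast}}(x)\le 1-\exp\!\Big(-\frac{v^{\ast}}{N}\int_0^x\vartheta\Big)\le 1 .
\]
The same trick applied to the first part of $X_{v^{\ast}}$ gives a contribution of at most $1$. The second part contributes at most $\nu\int_0^x W_{v^{\ast}}\le\nu A$. Hence $0\le W_{v^{\ast}}+X_{v^{\ast}}\le 2+\nu A$, and integrating against $p\le\|p\|_\infty$ over $[0,A]$ gives $F(v^{\ast})\le M$.

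\textbf{Step 3 (conclusion).} For (i), suppose $\mathcal{R}_0\le1$. Strict monotonicity gives $G(v^{\ast})<G(0^+)=\mathcal{R}_0\le 1$ for all $v^{\ast}>0$, so $G(v^{\ast})=1$ has no solution, and in particular none in $(0,M]$.

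For (ii), suppose $\mathcal{R}_0>1$. Then $G(0^+)>1$, while Step 2 gives $G(M)=F(M)/M\le 1$. The intermediate value theorem yields some $v^{\ast}\in(0,M]$ with $G(v^{\ast})=1$, and strict monotonicity of $G$ makes it unique in $(0,\infty)$, hence in $(0,M]$.

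The main obstacle is Step 1: establishing \emph{strict} monotonicity of the $X$-contribution, which is coupled to $W$. The plan avoids needing strictness there, because the $W$-part already decreases strictly, and the $X$-part only needs to be nonincreasing, which follows from the product argument above.
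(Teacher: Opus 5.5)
Your proof is correct. Its overall architecture matches the paper's:
\begin{itemize}
\item the ratio $G=F/v^{\ast}$ (the paper subtracts $1$);
\item the bound $F\le M$, obtained from the identity $\frac{v^{\ast}}{N}\int_0^x\vartheta(s)\exp\bigl(-\frac{v^{\ast}}{N}\int_s^x\vartheta\bigr)ds=1-\exp\bigl(-\frac{v^{\ast}}{N}\int_0^x\vartheta\bigr)$;
\item the limit $G(0^+)=\mathcal{R}_0$;
\item the intermediate value theorem combined with strict monotonicity.
\end{itemize}

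The genuine difference lies in how you show that $G$ is monotone.

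The paper differentiates. It computes $\partial_{v^{\ast}}(W_{v^{\ast}}/v^{\ast})=\mathcal{T}_1$ and $\partial_{v^{\ast}}(X_{v^{\ast}}/v^{\ast})=\mathcal{T}_2+\mathcal{T}_3+\mathcal{T}_4$, and checks the sign of each term. It then bounds each $\mathcal{T}_i$ uniformly in $v^{\ast}$ so that the Leibniz rule applies, and concludes $G'<0$.

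You argue directly. After dividing by $v^{\ast}$, the variable $v^{\ast}$ appears only in decaying exponentials. Hence every integrand is pointwise nonincreasing in $v^{\ast}$: the first part of $X$ uses $1-\Omega_b\ge 0$, and the coupling term uses that $W_{v^{\ast}}(s)/v^{\ast}$ is itself a nonnegative nonincreasing factor. Strictness comes only from the $W$-part, through $\vartheta>0$ and $p>0$.

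Your route is shorter and needs neither differentiation under the integral sign nor derivative bounds. It also isolates exactly which hypotheses produce strict decrease, and shows that the $X$-part only needs to be nonincreasing. The paper's route gives extra information that the theorem does not use: $G$ is differentiable with $|G'|$ bounded uniformly in $v^{\ast}$, so $G$ is Lipschitz.
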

\begin{proof}
We first verify that $F$ maps $(0,M]$ into $(0,M]$. From Eq.~\eqref{eq:WXvc}, noting that all exponential factors are bounded above by 1 and observing
\begin{equation}\nonumber
\frac{v^{\ast}}{N}\int_0^x\vartheta(s)
\exp\!\left(-\frac{v^{\ast}}{N}\int_s^x
\vartheta(r)\,dr\right)ds
= 1-e^{-T}\le 1,
\quad T=\frac{v^{\ast}}{N}\int_0^x\vartheta(r)\,dr\ge 0,
\end{equation}
it follows from the above representation that
\begin{equation}\nonumber
0 < W_{v^{\ast}}(x)\le 1, \quad 
0\leq X_{v^{\ast}}(x)\le 1+\nu A,
\quad\text{uniformly in }v^{\ast}> 0.
\end{equation}
Hence, for all $x\in[0,A]$:
\begin{equation}\nonumber
0 \leq W_{v^{\ast}}(x)+X_{v^{\ast}}(x)\le 2+\nu A.
\end{equation}
Thus,
\begin{equation}
0 \leq F(v^{\ast})
=
\frac{\Lambda\tilde{\sigma}}{\rho^2}
\int_0^A p(x)
\bigl(W_{v^{\ast}}(x)+X_{v^{\ast}}(x)\bigr)\,dx
\le
\frac{\Lambda\tilde{\sigma}}{\rho^2}\cdot A\,\|p\|_{\infty} (2+\nu A)
= M
\end{equation}
Hence, $F$ maps $(0,M]$ into $(0,M]$. Next, we define
\begin{equation}\label{eq:def_G}
G(v^{\ast}) := \frac{F(v^{\ast})}{v^{\ast}} - 1.
\end{equation}
We observe that $F$ is continuous on $(0,M]$. Indeed, under assumptions (A1)–(A3), for each $x \in [0,A]$, the functions $W_{v^{\ast}}(x)$ and $X_{v^{\ast}}(x)$ depend continuously on $v^{\ast}$, and the integrand in $F(v^{\ast})$ is uniformly bounded by integrable function on $[0,A]$. Hence, by the dominated convergence theorem, $F$ is continuous. Consequently, $G$ is continuous on $(0,M]$.
Therefore, from the definition~\eqref{eq:def_G}, it follows that:
\begin{equation}\label{eq:GM}
G(M) = \frac{F(M)}{M}-1 \le 0,\, \quad
G(0^+)
=
\lim_{v^{\ast}\to 0^+}\frac{F(v^{\ast})}{v^{\ast}}-1
=
\frac{\Lambda\tilde{\sigma}}{\rho^2}
\int_0^A p(x)\bigl(w(x)+\xi(x)\bigr)\,dx-1
=
\mathcal{R}_0-1.
\end{equation}
Moreover, in order to show the monotonicity of $G$, we compute $G'(v^{\ast})$ and the partial derivative terms 
$\partial(W_{v^*}/v^*)/\partial v^*$,
$\partial(X_{v^*}/v^*)/\partial v^*$ need to be  bounded 
uniformly. We first calculate the corresponding partial derivative terms as follows:
\begin{equation}\label{eq:wxp}
\frac{\partial}{\partial v^*}\left(\frac{W_{v^*}}{v^*}\right) =\mathcal{T}_1(x), \qquad
\frac{\partial}{\partial v^*}\left(\frac{X_{v^*}}{v^*}\right)
=\mathcal{T}_2(x)+\mathcal{T}_3(x)+\mathcal{T}_4(x)
\end{equation}
where 
\begin{equation}\label{eq:all_T}
\begin{aligned}
\mathcal{T}_1(x)
&=-\frac{1}{N^2}
\int_0^x
\vartheta(s)\,
\exp\!\left(
-\nu s-(\gamma+\nu)(x-s)
-\frac{v^{\ast}}{N}\int_s^x\vartheta(r)\,dr
\right)
\left(\int_s^x\vartheta(r)\,dr\right)ds,\\
\mathcal{T}_2(x)
&:=
-\int_0^x
\frac{\vartheta(s)}{N}(1-\Omega_b)\left(1-\exp(-\nu s)\right)
\exp\!\left(
-\gamma(x-s)
-\frac{v^{\ast}(1-\Omega_b)}{N}
\int_s^x\vartheta(r)\,dr
\right)
\frac{(1-\Omega_b)}{N}
\int_s^x\vartheta(r)\,dr\,ds,
\\[6pt]
\mathcal{T}_3(x)
&:=
\int_0^x
\nu\,\mathcal{T}_1(s)\,
\exp\!\left(
-\gamma(x-s)
-\frac{v^{\ast}(1-\Omega_b)}{N}
\int_s^x\vartheta(r)\,dr
\right)ds,
\\[6pt]
\mathcal{T}_4(x)
&:=
-\int_0^x
\nu\,\frac{W_{v^{\ast}}(s)}{v^{\ast}}\,
\exp\!\left(
-\gamma(x-s)
-\frac{v^{\ast}(1-\Omega_b)}{N}
\int_s^x\vartheta(r)\,dr
\right)
\frac{(1-\Omega_b)}{N}
\int_s^x\vartheta(r)\,dr\,ds.
\end{aligned}
\end{equation}
\noindent
Consequently, we have
\[
\mathcal{T}_1(x) < 0, \quad
\mathcal{T}_2(x) \le 0, \quad
\mathcal{T}_3(x) < 0, \quad
\mathcal{T}_4(x) \le 0.
\]
Moreover, by bounding each factor in Eq.~\eqref{eq:all_T}, using $\vartheta\in L^\infty(0,A)$ from (A1) and $x\le A$, we obtain
\begin{align*}
|\mathcal{T}_1(x)|
&\le
\frac{\|\vartheta\|_\infty^2 A^2}{N^2}
:= C_1, 
\qquad
|\mathcal{T}_2(x)|
\le C_1,\\
|\mathcal{T}_3(x)|
&\le
\nu A\, C_1
:= C_2, 
\qquad
|\mathcal{T}_4(x)|
\le C_2.
\end{align*}
Hence each 
$\mathcal{T}_i$ satisfies
\begin{equation}\label{eq:Ti_bound}
-C^{\ast} \le \mathcal{T}_i(x)\le 0,
\quad i=1,2,3,4,
\end{equation}
where $C^{\ast}:=\max\{C_1,C_2\}$ is finite and independent of $v^{\ast}$.
 Therefore, the partial derivatives in Eq.~\eqref{eq:wxp} are uniformly bounded in $v^{\ast}$ under assumptions (A1)--(A3). Consequently, the Leibniz rule for differentiation under the integral sign applies, yielding
\begin{equation}\label{eq:Gp}
\begin{aligned}
G'(v^{\ast})
&=
\frac{\Lambda\tilde{\sigma}}{\rho^2}
\int_0^A p(x)
\left(
\frac{\partial}{\partial v^{\ast}}\!\left(\frac{W_{v^{\ast}}}{v^{\ast}}\right)
+
\frac{\partial}{\partial v^{\ast}}\!\left(\frac{X_{v^{\ast}}}{v^{\ast}}\right)
\right)dx\\
&=\frac{\Lambda\tilde{\sigma}}{\rho^2}
\int_0^A p(x)
\left(  \mathcal{T}_1+\mathcal{T}_2+\mathcal{T}_3+\mathcal{T}_4 \right)dx
\end{aligned}
\end{equation}
Hence from Eqs.~\eqref{eq:all_T} and \eqref{eq:Gp}, we have 
$G'(v^{\ast})<0$ i.e., $G$ is strictly decreasing
on $(0,M]$.\\
\medskip
\noindent\emph{Proof of~(i).}\\
By Eq.~\eqref{eq:GM}, we know $G(0^+)=\mathcal{R}_0-1\le 0$. Since $G$ is strictly decreasing, it follows that:
\[
G(v^{\ast}) < G(0^+) = \mathcal{R}_0-1 \le 0
\quad\text{for all }v^{\ast}\in(0,M].
\]
Hence $G(v^{\ast})=0$ has no solution in $(0,M]$, and no
endemic equilibrium exists.\\
\medskip
\noindent\emph{Proof of~(ii).}\\
Using Eq.~\eqref{eq:GM}, we have:
\[
G(0^+) = \mathcal{R}_0-1 > 0 \ge G(M). 
\]
Since $G$ is continuous and strictly decreasing on $(0,M]$
with $G(0^+)>0\ge G(M)$, the intermediate value property
guarantees the existence of at least one $v^{\ast}\in(0,M]$
satisfying $G(v^{\ast})=0$, that is, $v^{\ast}=F(v^{\ast})$. Strict
monotonicity of $G$ ensures uniqueness. Consequently,
the endemic equilibrium is unique.
\end{proof}
%========================================
%====================================
\section{Numerical Method}
We numerically compute the age distributions of the 
non-vaccinated and vaccinated populations, namely, 
$Q(x)$, $J(x)$, $\hat{S}(x)$, and $\hat{I}(x)$, 
together with the corresponding infected mosquito 
equilibrium $v^{\ast}$. These quantities are obtained 
from the population fractions $W(x)$, $X(x)$, $Y(x)$, 
and $Z(x)$ through the relations introduced in 
Eq.~\eqref{eq:wxyz}. The problem is described by a 
coupled system consisting of age-dependent ordinary 
differential equations for $W(x)$ and $X(x)$ given 
in Eqs.~\eqref{eq:Wode} and~\eqref{eq:Xode}, together 
with a fixed-point relation for $v^{\ast}$ in 
Eq.~\eqref{eq:vstar}. By the existence and uniqueness 
result established in Sec.~\ref{sec:EU}, this 
fixed-point problem admits a unique solution 
$v^{\ast}\in(0,M]$ whenever $\mathcal{R}_0>1$. 
Starting from an initial guess $v^{\ast(0)}>0$, 
the ODE system for $W_{v^{\ast(n)}}(x)$ and 
$X_{v^{\ast(n)}}(x)$ is solved on the interval $[0,A]$. 
The age variable is discretized using a uniform grid, 
and the system is advanced in age by a first-order 
explicit Euler scheme. After computing 
$W_{v^{\ast(n)}}$ and $X_{v^{\ast(n)}}$, the mosquito 
equilibrium is updated according to
\begin{equation}\label{eq:iter}
v^{\ast(n+1)}
=
\frac{\Lambda\tilde{\sigma}}{\rho^2}
\int_0^A p(x)
\bigl(W_{v^{\ast(n)}}(x)+X_{v^{\ast(n)}}(x)\bigr)\,dx,
\end{equation}
where the integral is approximated using the composite 
trapezoidal rule. The iteration is repeated until the 
convergence criterion
\[
\max\left\{
\|W^{(n+1)}-W^{(n)}\|_{\infty},\;
\|X^{(n+1)}-X^{(n)}\|_{\infty},\;
|v^{\ast(n+1)}-v^{\ast(n)}|
\right\}
< \varepsilon
\]
is satisfied for a prescribed tolerance $\varepsilon>0$. 
Once convergence is achieved, the remaining components 
$Q(x)$, $J(x)$, $\hat{S}(x)$, and $\hat{I}(x)$ are 
recovered from Eq.~\eqref{eq:ZYrel}.

\section{Results and Discussion}
For the numerical simulations, we consider a maximal age $A = 80$ years with a constant human mortality rate $\mu = 0.01~\text{year}^{-1}$ and a total population size $N = 550{,}670$ persons, corresponding to a representative urban population. The recovery rate is set to $\gamma = 73~\text{year}^{-1}$ (equivalently, $0.2~\text{day}^{-1}$), reflecting a mean infectious period of approximately five days, and the vaccination rate is taken as $\nu = 0.1~\text{year}^{-1}$. 

For the mosquito population, the recruitment rate is chosen as $\Lambda = 2704~\text{mosquitoes}~\text{year}^{-1}$ and the natural death rate as $\rho = 26~\text{year}^{-1}$, corresponding to an average mosquito lifespan of approximately 14 days, consistent with field observations for \emph{Aedes} mosquitoes. The average transmission rate from infected humans to mosquitoes is taken as $\tilde{\sigma} = 0.9~(\text{persons}\cdot\text{year})^{-1}$. The age-dependent transmission function $\vartheta(x)$, representing the transmission rate from infected mosquitoes to humans, is assumed to have units $(\text{mosquito}\cdot\text{year})^{-1}$ and is taken to be piecewise constant:
\[
\vartheta(x) =
\begin{cases}
0.5, & 0 \le x < 10, \\ 
0.3, & 10 \le x < 50, \\
0.1, & 50 \le x \le A,
\end{cases}
\]
reflecting higher exposure among younger individuals and reduced transmission at older ages. For the chosen parameter values, the basic reproduction number evaluates to $\mathcal{R}_0 = 1.5768$ for $\Omega_b = 0.9$ and $\mathcal{R}_0 = 2.9601$ for $\Omega_b = 0.5$, confirming that $\mathcal{R}_0 > 1$ in both cases. Hence, the conditions of Theorem-\ref{thm:main} are satisfied, and a unique endemic equilibrium exists in each scenario.

Figs.~\ref{fig:f} and~\ref{fig:f2} illustrate the age profiles of the non-vaccinated and vaccinated populations, respectively. In particular, the right panels show that increasing vaccine efficacy leads to a marked reduction in the densities of infected individuals, both in the non-vaccinated class $J(x)$ and in the vaccinated class $\hat{I}(x)$. Conversely, the left panels indicate an increase in the densities of susceptible individuals, namely the non-vaccinated susceptibles $Q(x)$ and the vaccinated susceptibles $\hat{S}(x)$, as vaccine efficacy increases.
\begin{figure}
\centering
\begin{subfigure}{0.48\linewidth}
    \includegraphics[width=\linewidth]{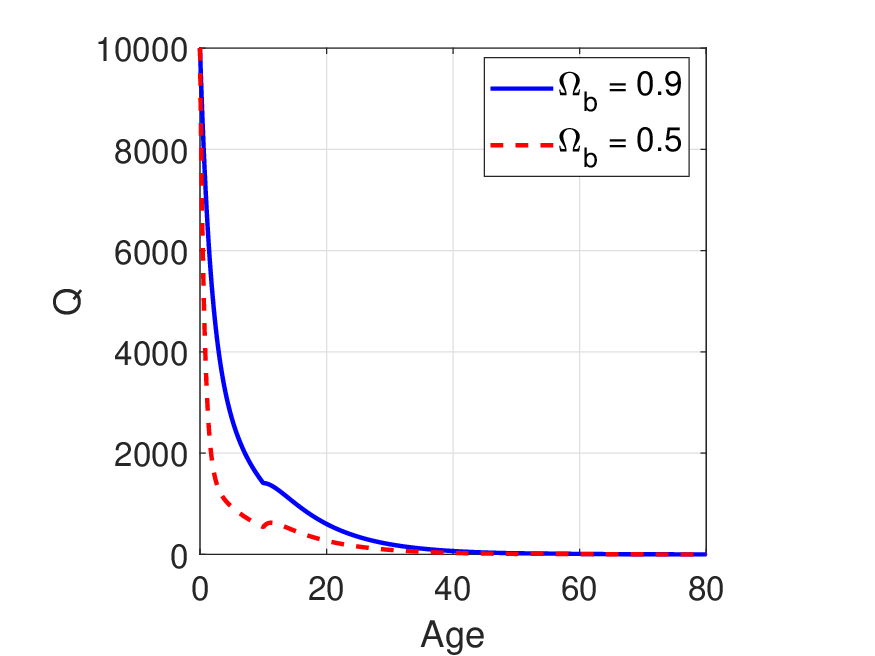}
    \caption{Non-vaccinated susceptible}
\end{subfigure}
\hfill
\begin{subfigure}{0.48\linewidth}
    \includegraphics[width=\linewidth]{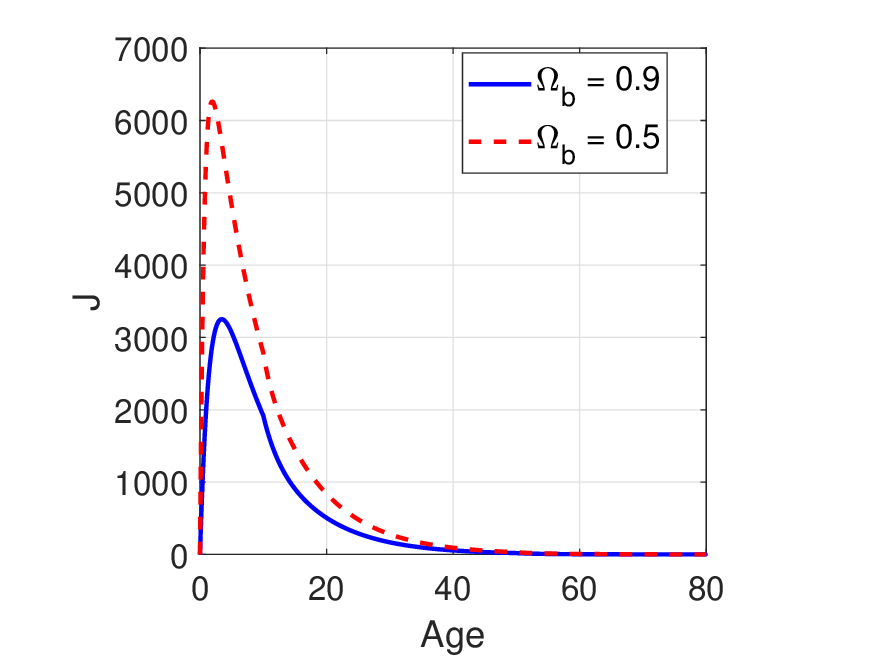}
    \caption{Non-vaccinated infected}
\end{subfigure}
\caption{Age profiles of non-vaccinated individuals 
for $\Omega_b=0.9$ and $\Omega_b=0.5$.}
\label{fig:f}
\end{figure}
%--------------------------------------------------
\begin{figure}
\centering
\begin{subfigure}{0.48\linewidth}
    \includegraphics[width=\linewidth]{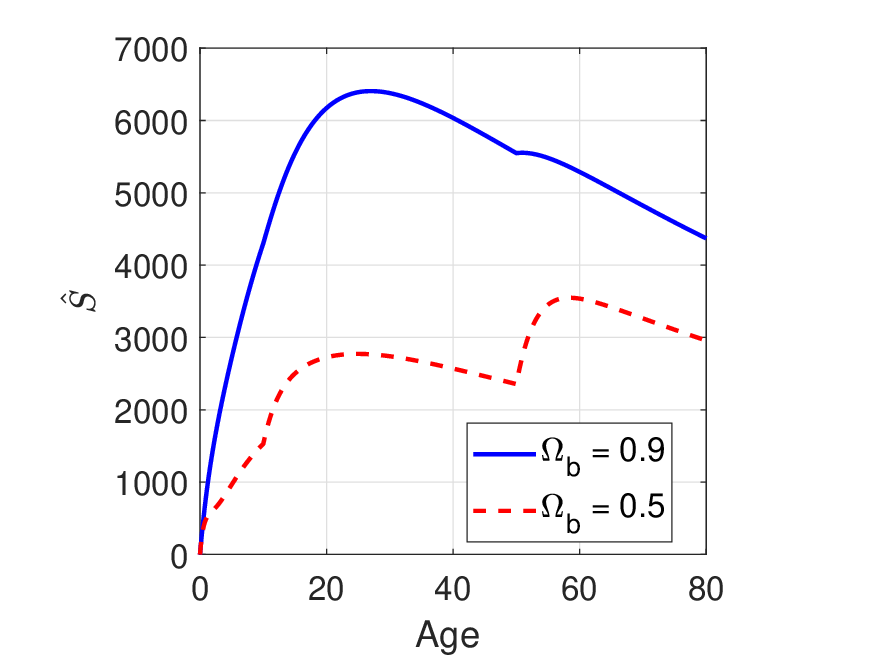}
    \caption{Vaccinated susceptible}
\end{subfigure}
\hfill
\begin{subfigure}{0.48\linewidth}
    \includegraphics[width=\linewidth]{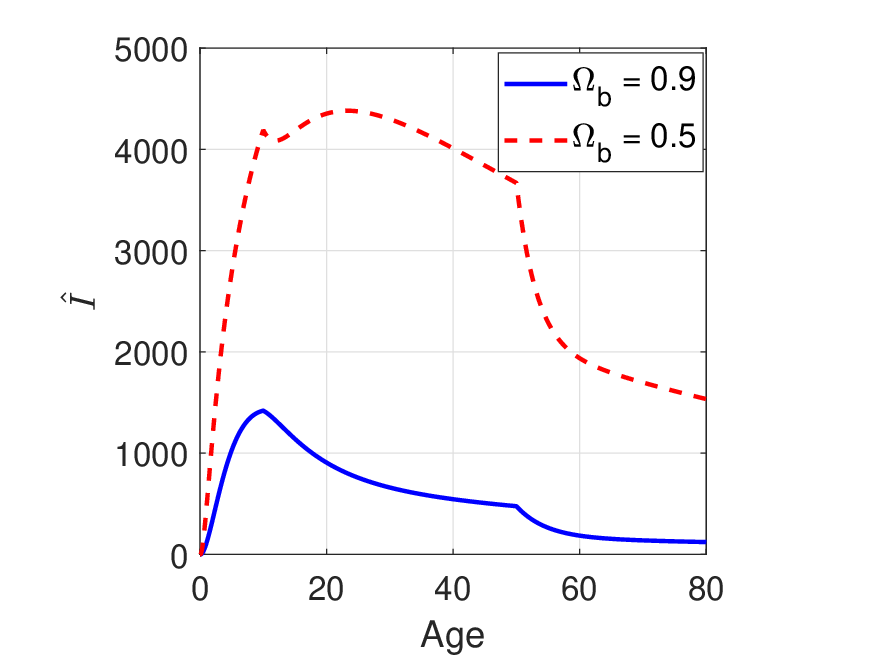}
    \caption{Vaccinated infected}
\end{subfigure}
\caption{Age profiles of vaccinated individuals 
for $\Omega_b=0.9$ and $\Omega_b=0.5$.}
\label{fig:f2}
\end{figure}
%===========================================================
\section{Conclusion}
In this work, we developed an age-structured 
vector--host model for dengue transmission that 
incorporates vaccination with explicit dependence 
on time since vaccination. By integrating over the vaccination-age variable, the model is reduced to an age-dependent system, and the endemic equilibrium is characterized as a fixed-point problem for the infected mosquito population. Existence and uniqueness of the endemic equilibrium are then established under a biologically meaningful threshold condition in terms of the basic reproduction number, using monotonicity arguments and the intermediate value theorem. The analysis demonstrates how vaccination and 
vaccination-age-dependent efficacy reduce endemic 
infection levels by lowering the susceptible human 
population, which in turn reduces the infection 
pressure on mosquitoes. Numerical simulations support the 
theoretical findings and illustrate the role of 
vaccine efficacy in reducing infection levels 
across age groups. The framework provides a 
mathematically rigorous basis for studying 
age-targeted vaccination strategies in 
vector-borne diseases and can be extended to 
include more detailed vector dynamics or waning 
immunity mechanisms.
\section*{Declaration of competing interest}
The authors declare that they have no known competing financial interests or personal relationships that could have appeared to influence the work reported in this paper.

\section*{Data availability statement}
Data will be made available on request

\section*{Acknowledgement}
The authors gratefully acknowledge the financial support provided under the DAAD--SPARC Project ``MAIHRT--Mathematical AI in Healthcare: Research and Teaching Perspective'' (Project ID: 57807795). This support enabled collaborative research between the National Institute of Technology Calicut and the University of Koblenz, facilitating research visits, exchange of ideas, and the development of the results presented in this work. 
%=======================================
\bibliographystyle{ieeetr}
\bibliography{ref}
\end{document}